\theoremstyle{plain}
\newtheorem{prop}{\protect\propositionname}
\providecommand{\propositionname}{Proposition}
\begin{document}
\begin{doublespace}
\begin{center}
\textbf{\large{}Are Instrumental Variables Really That Instrumental?
Endogeneity Resolution in Regression Models for Comparative Studies}{\large\par}
\par\end{center}

\begin{center}
\textbf{\large{}Ravi Kashyap (ravi.kashyap@stern.nyu.edu)}{\large\par}
\par\end{center}

\begin{center}
\textbf{\large{}Estonian Business School, Tallinn, Estonia / Formation
Fi, Hong Kong / City University of Hong Kong, Hong Kong}{\large\par}
\par\end{center}

\begin{center}
\begin{center}
\today
\par\end{center}
\par\end{center}

\begin{center}
{\large{}Keywords: Endogeneity; Bias; Coefficients; Instrumental Variable;
Comparative; Regression Model; Time Series Analysis; Econometric;
Experimental Design; Quantitative Methodology}{\large\par}
\par\end{center}

\begin{center}
{\large{}Journal of Economic Literature Codes: C32 Time-Series Models;
C36 Instrumental Variables (IV) Estimation; C44 Statistical Decision
Theory}{\large\par}
\par\end{center}

\begin{center}
{\large{}Mathematics Subject Classification Codes: 62M10 Time series,
regression; 91G70 Statistical methods, econometrics; 97D10 Comparative
studies}{\large\par}
\par\end{center}

\begin{center}
\textbf{\textcolor{blue}{\href{https://doi.org/10.5705/ss.202019.0459}{Edited Version: Kashyap, R. (2022).  Are Instrumental Variables Really That Instrumental? Endogeneity Resolution in Regression Models for Comparative Studies. Statistica Sinica, 32 (2022), 645-651, }}}\tableofcontents{}
\par\end{center}
\end{doublespace}
\begin{doublespace}

\section*{{\large{}Abstract }}
\end{doublespace}

\begin{doublespace}
\noindent {\large{}We provide a justification for why, and when, endogeneity
will not cause bias in the interpretation of the coefficients in a
regression model. This technique can be a viable alternative to, or
even used alongside, the instrumental variable method. We show that
when performing any comparative study, it is possible to measure the
true change in the coefficients under a broad set of conditions. Our
results hold, as long as the product of the covariance structure between
the explanatory variables and the covariance between the error term
and the explanatory variables are equal, within the same system at
different time periods or across multiple systems at the same point
in time.}{\large\par}
\end{doublespace}
\begin{doublespace}

\section{{\large{}\label{sec:Introduction}Introduction}}
\end{doublespace}

\begin{doublespace}
\noindent {\large{}Endogeneity is an econometric issue that arises
in regression models, causing the coefficients of the independent
(or explanatory) variables to be biased. Despite the problem having
been studied extensively in the literature, many researchers are either
unaware of the extent of the errors caused by this issue, or they
end up looking for solutions that might not be easily found (Avery
2005; Kim \& Washington 2006; Gordon 2015; Hill et al. 2021).}{\large\par}

\noindent {\large{}Consider the multiple linear regression model,
shown using matrix notation, in Equation (\ref{eq:Matrix_Regression}).
Here, $\boldsymbol{y}$ is a vector of observed values of the dependent
variables, $\boldsymbol{X}$ is a matrix of the independent variables,
$\boldsymbol{\beta}$ is the parameter vector to be estimated, and
$\boldsymbol{u}$ is a vector of the error terms:
\begin{equation}
\boldsymbol{y}=\boldsymbol{X\beta}+\boldsymbol{u}.\label{eq:Matrix_Regression}
\end{equation}
Endogeneity arises when the error term in the regression model and
the independent variables are correlated, that is, if }$\boldsymbol{E\left[X^{T}u\right]\neq0}$. 

\noindent {\large{}We are interested in studying this model (Equation
\ref{eq:Matrix_Regression}), with $\boldsymbol{m}$ and $\boldsymbol{n}$
independent and identically distributed (i.i.d) samples, before and
after an event. For such an event study, let the true coefficients
before and after the event be denoted by $\boldsymbol{\beta_{B}}$
and $\boldsymbol{\boldsymbol{\beta_{A}}}$, respectively. Clearly,
$\boldsymbol{\beta_{B}}$ and $\boldsymbol{\boldsymbol{\beta_{A}}}$
could be the coefficients for two different systems we are looking
to compare. We wish to make inferences on the difference between the
coefficients, $\boldsymbol{\beta_{B}-\boldsymbol{\beta_{A}}}$, under
the possible presence of endogeneity.}{\large\par}

\noindent {\large{}Our main result (Proposition \ref{prop:Event_Study-Endogeneity-Resolution})
in Section (\ref{sec:CER}) gives a mathematical justification for
the conditions under which endogeneity will not cause significant
bias in the interpretation of the regression coefficients. Even if
these criteria are not met in their entirety, we will see that endogeneity
is mitigated to a great extent. This illustrates that our solution
is an ideal way to approach the construction of regression models.
Despite the simplicity of our methodology, to the best of our knowledge,
no other works in the literature employ a similar approach to either
partially combat or completely resolve endogeneity. The relative ease
of implementation should allow researchers from almost any discipline
to effortlessly adopt our solution when designing their experiments
or while utilizing econometric techniques on the data collected. Section
(\ref{sec:EERI}) considers situations under which our results might
hold, including suggestions for further topics that need to be investigated
to make this approach more useful. Section (\ref{sec:DEM}) discusses
the main consequences of our result and concludes the paper.}{\large\par}
\end{doublespace}
\begin{doublespace}

\section{{\large{}\label{sec:CER}Condition for Endogeneity Resolution}}
\end{doublespace}

\begin{doublespace}
\noindent {\large{}With the framework given in Equation (\ref{eq:Matrix_Regression}),
we present our primary result below.}{\large\par}
\end{doublespace}
\begin{prop}
\begin{doublespace}
\noindent {\large{}\label{prop:Event_Study-Endogeneity-Resolution}Let
the true coefficients of the regression model be denoted by $\boldsymbol{\beta_{B}}$
and $\boldsymbol{\boldsymbol{\beta_{A}}}$. The suffixes $\boldsymbol{B}$
and $\boldsymbol{A}$ denote values before and after an event, respectively,
or across the two different systems under comparison. Suppose we have
not been able to completely eliminate endogeneity. Then, the biased
coefficients are denoted by $\boldsymbol{\hat{\beta}_{BE}}$ and $\boldsymbol{\hat{\beta}_{AE}}$,
and $\boldsymbol{c}$ is a vector that holds the covariance of the
error term with each of the explanatory variables. The change in the
coefficient values before and after the event or across the two systems
is given by,
\begin{equation}
\boldsymbol{E\left[\hat{\beta}_{AE}\right]}-\boldsymbol{E\left[\hat{\beta}_{BE}\right]}=\boldsymbol{\boldsymbol{\beta_{A}}-\boldsymbol{\beta_{B}}}\boldsymbol{+}\boldsymbol{E\left[\left(X_{A}^{T}X_{A}\right)^{-1}\boldsymbol{c_{A}}\right]}\boldsymbol{-}\boldsymbol{\left[E\left(X_{B}^{T}X_{B}\right)^{-1}\boldsymbol{c_{B}}\right]}.
\end{equation}
Because we are considering only the difference in the coefficients,
we end up measuring the true change in the coefficients, as long as
\begin{equation}
\boldsymbol{E\left[\left(X_{B}^{T}X_{B}\right)^{-1}\boldsymbol{c_{B}}\right]}=\boldsymbol{E\left[\left(X_{A}^{T}X_{A}\right)^{-1}\boldsymbol{c_{A}}\right]}.\label{eq:Equality-Criterion}
\end{equation}
}{\large\par}
\end{doublespace}
\end{prop}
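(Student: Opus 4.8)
The plan is to start from the ordinary least squares (OLS) estimator, track how endogeneity enters as an additive bias term, and then exploit the differencing inherent in the comparative design to cancel that bias. First I would write the estimator for the before-event system as $\hat{\beta}_{BE} = (X_B^T X_B)^{-1} X_B^T y_B$, and substitute the true model $y_B = X_B \beta_B + u_B$ from Equation (\ref{eq:Matrix_Regression}). This yields $\hat{\beta}_{BE} = \beta_B + (X_B^T X_B)^{-1} X_B^T u_B$, where the second term is precisely the endogeneity-induced bias and is nonzero in expectation exactly when $E[X_B^T u_B] \neq 0$. Identifying the vector $X_B^T u_B$ with the covariance vector $c_B$ between the error term and the explanatory variables, I would write $\hat{\beta}_{BE} = \beta_B + (X_B^T X_B)^{-1} c_B$, and analogously $\hat{\beta}_{AE} = \beta_A + (X_A^T X_A)^{-1} c_A$ for the after-event system.

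Next I would take expectations of each estimator and subtract. Since $\beta_B$ and $\beta_A$ are the fixed true parameters, only the bias terms carry an expectation, giving
\[
E[\hat{\beta}_{AE}] - E[\hat{\beta}_{BE}] = (\beta_A - \beta_B) + E[(X_A^T X_A)^{-1} c_A] - E[(X_B^T X_B)^{-1} c_B],
\]
which is the first displayed identity of the Proposition. The comparative structure is what makes this useful: the quantity of interest is $\beta_A - \beta_B$, and the two residual bias vectors enter additively. Imposing the equality in Equation (\ref{eq:Equality-Criterion}) forces those two terms to coincide and cancel, leaving $E[\hat{\beta}_{AE}] - E[\hat{\beta}_{BE}] = \beta_A - \beta_B$ exactly, so the measured change in coefficients equals the true change even under endogeneity.

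The step I expect to demand the most care is the interpretation of the expectation $E[(X^T X)^{-1} c]$ as a single expectation over the joint law of the design matrix and the errors. Because $(X^T X)^{-1}$ and $c = X^T u$ are both random and in general dependent, one cannot factor this as $E[(X^T X)^{-1}]\, E[c]$ without extra assumptions; I would therefore keep the product under one expectation throughout and invoke the stated i.i.d. sampling only to argue that each per-system bias term is a well-defined constant vector. The crucial conceptual point, which I would emphasize, is that the cancellation in Equation (\ref{eq:Equality-Criterion}) does not require either bias to vanish on its own — it requires only that the two expected bias vectors match across systems. That is the precise and, frankly, the only substantive content of the result: the algebra is elementary, and the work lies in stating the conditions under which the matching of Equation (\ref{eq:Equality-Criterion}) is plausible, which I would defer to the discussion of sufficient conditions in the later sections.
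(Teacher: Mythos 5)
Your proposal is correct and follows essentially the same route as the paper, whose proof is only the one-line remark that the result follows ``by taking the difference and then the expectations of the coefficient estimators when there is endogeneity''; your write-up simply makes explicit the standard decomposition $\hat{\beta}=\beta+(X^{T}X)^{-1}X^{T}u$, the identification of $X^{T}u$ with the covariance vector $c$, and the cancellation under Equation (\ref{eq:Equality-Criterion}) that the paper leaves implicit. Your added caution about keeping $(X^{T}X)^{-1}$ and $c$ under a single expectation, rather than factoring them, is consistent with how the paper states its condition and is a worthwhile clarification, not a departure.
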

\begin{proof}
\begin{doublespace}
\noindent {\large{}The result follows by taking the difference and
then the expectations of the coefficient estimators when there is
endogeneity.}{\large\par}
\end{doublespace}
\end{proof}
\begin{doublespace}
\noindent {\large{}What our result suggests is that as long as the
product of the covariance structure between the explanatory variables
(inverse of the covariance matrix) and the covariance between the
error term and the explanatory variables have equal expectations before
and after the event (or across the two systems), we end up measuring
the actual change between the coefficients. This ensures that we are
using the correct values to understand how any system has changed
across time, or how two different systems can be evaluated at the
same point in time. This condition is trivially satisfied when $\boldsymbol{E\left[\left(X_{B}^{T}X_{B}\right)^{-1}\boldsymbol{c_{B}}\right]}=\boldsymbol{E\left[\left(X_{A}^{T}X_{A}\right)^{-1}\boldsymbol{c_{A}}\right]}=0$. }{\large\par}
\end{doublespace}
\begin{doublespace}

\section{{\large{}\label{sec:EERI}Embarking on Endogeneity Resolution Improvements}}
\end{doublespace}

\begin{doublespace}
\noindent {\large{}Endogeneity can occur under three scenarios: 1)
if the dependent variables can influence the explanatory variables,
and vice versa; 2) some key variables are omitted in the regression
model; and 3) there are errors in the measurement of the variables.
Endogeneity arises in almost any study, as a silent malice, because
it is difficult to completely rule out omitted variables and measurement
errors. The first scenario, in which the dependent variables can influence
the explanatory variables, and vice versa, is generally observable.
Most conventional models require additional information for the coefficients
to be consistently estimated.}{\large\par}

\noindent {\large{}The most common workaround for endogeneity is to
use an additional variable that is uncorrelated with the error term
(instrument exogeneity), but is correlated with the explanatory variable
(instrument relevance) in the original model. This is known as the
instrumental variables regression method (Cragg \& Donald 1993; Verbeek
2008; Semadeni, Withers \& Trevis Certo 2014). The main difficulty
with this option is locating valid instruments that satisfy the conditions
of relevance and exogeneity (Miguel, Satyanath \& Sergenti 2004; Baser
2009; Sovey \& Green 2011). When the conditions from our result are
satisfied, we do not have to look for instrumental variables, which
in many instances are not easy to find.}{\large\par}

\noindent {\large{}Below, we list lines of investigation we are currently
working on. These can help to make our result more robust as well
as shed further light on the scenarios under which it would be more
applicable.}{\large\par}
\end{doublespace}
\begin{enumerate}
\begin{doublespace}
\item {\large{}Based on the type of dependency, between the covariance of
the error term with each of the explanatory variables and the covariance
structure between the explanatory variables, endogeneity will no longer
be an issue. For example, using Equation (\ref{eq:Equality-Criterion}),
we could have that $\boldsymbol{E\left[\left(X_{B}^{T}X_{B}\right)^{-1}\boldsymbol{c_{B}}\right]}=$
$\boldsymbol{E\left[\left(X_{B}^{T}X_{B}\right)^{-1}\right]E\left[\boldsymbol{c_{B}}\right]}$
and $\boldsymbol{\boldsymbol{E\left[\left(X_{A}^{T}X_{A}\right)^{-1}\boldsymbol{c_{A}}\right]=}}$
$\boldsymbol{E\left[\left(X_{A}^{T}X_{A}\right)^{-1}\right]E\left[\boldsymbol{c_{A}}\right]}$.
The requirement from our main result holds if the expectations are
equal individually, $\boldsymbol{E\left[\boldsymbol{c_{B}}\right]}=\boldsymbol{E\left[\boldsymbol{c_{A}}\right]}$
and $\boldsymbol{E\left[\left(X_{B}^{T}X_{B}\right)^{-1}\right]}=\boldsymbol{E\left[\left(X_{A}^{T}X_{A}\right)^{-1}\right]}$.
Even if the expectations of the variables are not equal, as long as
the variations in one of them is countered by variations in the other,
so that they are equal when taken together, as given by the expression
$\boldsymbol{E\left[\left(X_{B}^{T}X_{B}\right)^{-1}\right]E\left[\boldsymbol{c_{B}}\right]}=\boldsymbol{E\left[\left(X_{A}^{T}X_{A}\right)^{-1}\right]E\left[\boldsymbol{c_{A}}\right]}$,
our results will hold. Many other dependency structures are possible
and form fascinating avenues for further research.}{\large\par}
\item {\large{}It would be a fruitful line of inquiry to study Equation
(\ref{eq:Equality-Criterion}) and its related properties under each
of the three cases under which endogeneity arises, and to derive expressions,
including tests, that indicate whether our results hold for each situation.}{\large\par}
\item {\large{}It is quite possible that equality might turn out to be a
difficult criterion to satisfy. It would be helpful to show that there
might be convergence of the relevant parameter values under various
assumptions on the distributions. Specifically, we would examine the
convergence of the left and right sides of Equation (\ref{eq:Equality-Criterion}).
If convergence results cannot be achieved for some cases, depending
on the empirical context, approximations can be performed keeping
some error limits in mind. These could be upper or lower bounds on
the difference between the two sides of Equation (\ref{eq:Equality-Criterion}). }{\large\par}
\item {\large{}In some instances, it is likely that the true coefficients
might be equal, $\boldsymbol{\beta_{B}=\boldsymbol{\beta_{A}}}$.
Any test to verify the equality of the coefficients requires the condition
in Equation (\ref{eq:Equality-Criterion}), because we are taking
the differences and then the expectations of the coefficient estimators
to arrive at this result. This is necessary even if the mean difference
between the coefficient estimators is zero.}{\large\par}
\item {\large{}When comparing two different systems, it is quite clear how
we should estimate the relevant variables in the left and right sides
of Equation (\ref{eq:Equality-Criterion}) . However, while appraising
the same system across different points, the concepts of before and
after need clarification. This involves coming up with ways to split
the data sample. The context under which the data have been collected
should provide some guidelines, but mathematical suggestions on how
to do this should not be ruled out, and can be pursued further.}{\large\par}
\item {\large{}A simulation-based study to justify the theory we have put
forward would be worthwhile. Supplementing such a simulation-based
study with actual data would be a good next step.}{\large\par}
\end{doublespace}
\end{enumerate}
\begin{doublespace}
\noindent {\large{}Much work can be done in this space toward developing
a stronger theoretical foundation supporting the effectiveness of
this approach under relatively general assumptions on the structure
of endogeneity, or in terms of coming up with tests that can identify
whether certain assumptions hold before starting the analysis. Needless
to say, some subjective decisions have to be made, but that is the
case with all econometric modeling, testing, and interpretation.}{\large\par}
\end{doublespace}
\begin{doublespace}

\section{{\large{}\label{sec:DEM}Does Endogeneity Matter? Endogeneity Does
Matter, But Only Sometimes!}}
\end{doublespace}

\begin{doublespace}
\noindent {\large{}We have considered the issue of endogeneity in
regression models. We have aimed to keep our paper as concise as possible,
yet with sufficient detail to keep it self contained, so that the
results are immediately useful to researchers across different disciplines.
We have shown that endogeneity will not cause major bias in the interpretation
of the coefficients, as long as the product of the covariance structure
between the explanatory variables and the covariance between the error
term and the explanatory variables are equal in magnitude, within
the same system at different points in time, or across different systems
at the same point in time. }{\large\par}

\noindent {\large{}When performing any comparative study (Clasen 2004;
Hantrais 2008), the conditions we have outlined for the resolution
of endogeneity are usually satisfied. This suggests that if we are
not able to find valid instrumental variables, which is the most commonly
used approach when endogeneity is suspected to be present, we need
to consider designing our study as a comparative analysis. Even if
we have reasonably good instruments, we can have a comparative angle
in our study to supplement the results because this alleviates the
effect of omitted variables and errors in measurement, which are usually
difficult to detect. }{\large\par}

\noindent {\large{}Clearly, a comparative study can assess the same
system at different time periods, or it can be across multiple systems
at the same point in time. The main prescription from our result is
that, by conducting a comparative analysis, we can measure the true
change in the coefficients. Another justification for designing studies
with a comparative angle is that having an anchor point, or a frame
of reference for comparison, results in better decision making, as
opposed to making absolute judgments in isolation (Kahneman 1992;
Gavirneni \& Xia 2009).}{\large\par}

\noindent {\large{}Despite endogeneity having been extensively studied,
and numerous complicated methods having been put forth to deal with
it, simple alternatives such as those presented here would be extremely
useful for researchers. This will allow better use of the data they
have collected from their experiments, and even the design of better
experiments so that the resulting data can be analyzed with minimal
errors. Our simple workaround works quite well under a broad set of
conditions, especially when valid instruments are difficult to find.
More importantly, even if endogeneity is not completely removed, owing
to suitable conditions not being satisfied, it can be partly eliminated
in a much broader set of cases.}{\large\par}
\end{doublespace}
\begin{doublespace}

\section{{\large{}Acknowledgements}}
\end{doublespace}

\begin{doublespace}
\noindent {\large{}Dr. Yong Wang, Dr. Isabel Yan, Dr. Vikas Kakkar,
Dr. Fred Kwan, Dr. William Case, Dr. Srikant Marakani, Dr. Qiang Zhang,
Dr. Costel Andonie, Dr. Jeff Hong, Dr. Guangwu Liu, Dr. Humphrey Tung
and Dr. Xu Han at the City University of Hong Kong; Dr. Rong Chen,
anonymous reviewers and numerous seminar participants from various
events provided many suggestions to improve this paper. The views
and opinions expressed in this article, along with any mistakes, are
mine alone and do not necessarily reflect the official policy or position
of either of my affiliations or any other agency.}{\large\par}
\end{doublespace}
\begin{doublespace}

\section{{\large{}References}}
\end{doublespace}
\begin{enumerate}
\begin{doublespace}
\item {\large{}Avery, G. (2005). Endogeneity in logistic regression models.
Emerging infectious diseases, 11(3), 503.}{\large\par}
\item {\large{}Baser, O. (2009). Too much ado about instrumental variable
approach: is the cure worse than the disease?. Value in health, 12(8),
1201-1209.}{\large\par}
\item {\large{}Clasen, J. (2004). Defining comparative social policy. A
handbook of comparative social policy, 91-102.}{\large\par}
\item {\large{}Cragg, J. G., \& Donald, S. G. (1993). Testing identifiability
and specification in instrumental variable models. Econometric Theory,
9(2), 222-240.}{\large\par}
\item {\large{}Gavirneni, S., \& Xia, Y. (2009). Anchor selection and group
dynamics in news-vendor decisions—A note. Decision Analysis, 6(2),
87-97.}{\large\par}
\item {\large{}Gordon, D. V. (2015). The endogeneity problem in applied
fisheries econometrics: A critical review. Environmental and Resource
Economics, 61(1), 115-125.}{\large\par}
\item {\large{}Hill, A. D., Johnson, S. G., Greco, L. M., O’Boyle, E. H.,
\& Walter, S. L. (2021). Endogeneity: A review and agenda for the
methodology-practice divide affecting micro and macro research. Journal
of Management, 47(1), 105-143.}{\large\par}
\item {\large{}Hantrais, L. (2008). International comparative research:
Theory, methods and practice. Macmillan International Higher Education.}{\large\par}
\item {\large{}Kim, D. G., \& Washington, S. (2006). The significance of
endogeneity problems in crash models: an examination of left-turn
lanes in intersection crash models. Accident Analysis \& Prevention,
38(6), 1094-1100.}{\large\par}
\item {\large{}Kahneman, D. (1992). Reference points, anchors, norms, and
mixed feelings. Organizational behavior and human decision processes,
51(2), 296-312.}{\large\par}
\item {\large{}Miguel, E., Satyanath, S., \& Sergenti, E. (2004). Economic
shocks and civil conflict: An instrumental variables approach. Journal
of political Economy, 112(4), 725-753.}{\large\par}
\item {\large{}Semadeni, M., Withers, M. C., \& Trevis Certo, S. (2014).
The perils of endogeneity and instrumental variables in strategy research:
Understanding through simulations. Strategic Management Journal, 35(7),
1070-1079.}{\large\par}
\item {\large{}Sovey, A. J., \& Green, D. P. (2011). Instrumental variables
estimation in political science: A readers’ guide. American Journal
of Political Science, 55(1), 188-200.}{\large\par}
\item {\large{}Verbeek, M. (2008). A guide to modern econometrics. John
Wiley \& Sons.}{\large\par}
\end{doublespace}
\end{enumerate}

\end{document}